\newcommand{\note}[1]{}
\renewcommand{\note}[1]{~\\\frame{\begin{minipage}[c]{0.48\textwidth}\vspace{2pt}\center{#1}\vspace{2pt}\end{minipage}}\vspace{3pt}\\}
\newcommand{\hide}[1]{}
\definecolor{Blue}{rgb}{0.9,0.3,0.3}
\newcommand{\squishlist}{
   \begin{list}{$\bullet$}
    { \setlength{\itemsep}{0pt}      \setlength{\parsep}{3pt}
      \setlength{\topsep}{3pt}       \setlength{\partopsep}{0pt}
      \setlength{\leftmargin}{1.5em} \setlength{\labelwidth}{1em}
      \setlength{\labelsep}{0.5em} } }
\newcommand{\squishlisttwo}{
   \begin{list}{$\bullet$}
    { \setlength{\itemsep}{0pt}    \setlength{\parsep}{0pt}
      \setlength{\topsep}{0pt}     \setlength{\partopsep}{0pt}
      \setlength{\leftmargin}{2em} \setlength{\labelwidth}{1.5em}
      \setlength{\labelsep}{0.5em} } }
\newcommand{\squishend}{
    \end{list}  }
\newcommand{\defeq}{:=}
\newcommand{\myvec}[1]{\mathbf{#1}}
\newcommand{\myvecsym}[1]{\boldsymbol{#1}}
\newcommand{\vbeta}{\myvecsym{\beta}}
\newcommand{\vgamma}{\myvecsym{\gamma}}
\newcommand{\vLambda}{\myvecsym{\Lambda}}
\newcommand{\vSigma}{\myvecsym{\Sigma}}
\newcommand{\vp}{\myvec{p}}
\newcommand{\vr}{\myvec{r}}
\newcommand{\vx}{\myvec{x}}
\newcommand{\vy}{\myvec{y}}
\newcommand{\vX}{\myvec{X}}
\newcommand{\vY}{\myvec{Y}}
\newcommand{\diag}{\mathrm{diag}}
\newcommand{\calD}{{\cal D}}
\newcommand{\data}{\calD}
\newcommand{\be}{\begin{equation}}
\newcommand{\ee}{\end{equation}}
\newcommand{\bea}{\begin{eqnarray}}
\newcommand{\eea}{\end{eqnarray}}
\newcommand{\beaa}{\begin{eqnarray*}}
\newcommand{\eeaa}{\end{eqnarray*}}
\DeclareMathAlphabet{\mathpzc}{OT1}{pzc}{m}{n}
\newtheorem{mydefinition}{Definition}
\newtheorem{proposition}[mydefinition]{Proposition}
\icmltitlerunning{Adaptive Hamiltonian and Riemann Manifold Monte Carlo Samplers}
\begin{document}
\twocolumn[
\icmltitle{Adaptive Hamiltonian and Riemann Manifold Monte Carlo Samplers}

\icmlauthor{Ziyu Wang}{ziyuw@cs.ubc.ca}
\icmladdress{University of British Columbia
            Vancouver, Canada}
\icmlauthor{Shakir Mohamed}{shakirm@cs.ubc.ca}
\icmladdress{University of British Columbia
            Vancouver, Canada}
            \icmlauthor{Nando de Freitas}{nando@cs.ubc.ca}
\icmladdress{University of British Columbia
            Vancouver, Canada}
            
\icmlkeywords{Hybrid Monte Carlo, Hamiltonian Monte Carlo, adaptive MCMC, bandits, Bayesian optimization}

\vskip 0.3in]

\begin{abstract}
In this paper we address the widely-experienced difficulty in tuning Hamiltonian-based Monte Carlo samplers. We develop an algorithm that allows for the adaptation of Hamiltonian and Riemann manifold Hamiltonian Monte Carlo samplers using Bayesian optimization that allows for infinite adaptation of the parameters of these samplers. We show that the resulting sampling algorithms are ergodic, and that the use of our adaptive algorithms makes it easy to obtain more efficient samplers, in some cases precluding the need for more complex solutions. Hamiltonian-based Monte Carlo samplers are widely known to be an excellent choice of MCMC method, and we aim with this paper to remove a key obstacle towards the more widespread use of these samplers in practice.
\end{abstract}

\section{Introduction}
\label{sec:introduction}
Hamiltonian Monte Carlo (HMC) \cite{Duane-87} is widely-known as a powerful and efficient sampling algorithm, having been demonstrated to outperform many existing MCMC algorithms, especially in problems with high-dimensional, continuous, and correlated distributions \cite{chen-01,Neal-10}.  Despite this flexibility, HMC has not been widely adopted in practice, due principally to the sensitivity and difficulty of tuning its hyperparameters. In fact, tuning HMC has been reported by many experts to be more difficult than tuning other MCMC methods \cite{Iswharan-99, Neal-10}. In this paper we aim to remove this obstacle in the use of  HMC by providing an automated method of determining these tunable parameters, paving the way for a more widespread application of HMC in statistics and machine learning.
\\ \\
There are few existing works dealing with the automated tuning of HMC. Two notable approaches are: the No U-turn sampler (NUTS) \citep{hoffman2011no}, which is an adaptive algorithm for HMC that aims to find the best parameter settings by tracking the sample path and preventing HMC from retracing its steps in this path; and Riemann manifold HMC (RMHMC) \citep{Girolami-11}, which provides adaptations using the Riemannian geometry of the problem.

In this paper, we follow the approach of \textit{adapting} Markov chains in order to 
improve the convergence of both HMC and RMHMC. Our adaptive strategy is based on Bayesian optimization; see for example \citet{Brochu-09} and \citet{Snoek-12} for a clear and comprehensive introduction to Bayesian optimization. 
Bayesian optimization has been proposed previously  for the adaptation of general MCMC samplers by~\citet{Mahendran-12} and \citet{Hamze-11}. 
To guarantee convergence, these works were limited to a finite adaptation of the Markov chain. However, in the field of adaptive MCMC, it is well known 
that finite adaptation can result in the sampler being trapped in suboptimal parameter settings, leading to inefficient sampling. 
\\ \\
We describe Hamiltonian-based Monte Carlo samplers in section \ref{sec:hmc}, and then make the following contributions: 
\begin{itemize}
	\setlength{\itemsep}{1pt}
	\setlength{\parskip}{1pt}
	\setlength{\parsep}{0pt}
	\item We present an algorithm for adaptive HMC in which we allow for infinite adaptation of the Markov chain, thus avoiding parameter traps due to finite adaptation (section \ref{sec:ahmc}).
	\item Importantly, we prove that the adaptive MCMC samplers we present are ergodic in this infinite adaptation setting (section \ref{sec:convergence}). 
	\item We provide a comprehensive set of experiments demonstrating that the adaptive schemes perform better in a diverse set of statistical problems (section \ref{sec:results}). 
	\item For most examples, we use a version of the \emph{expected squared jumping distance} proposed by \citet{Pasarica-10} as the objective function for adaptation. However, in section~\ref{sec:results}, we also introduce a new approach for adaptive MCMC based on predictive measures, for use in settings where it is possible to perform cross-validation or bootstrapping.
\end{itemize}

\section{Hamiltonian-based Monte Carlo Sampling}
\label{sec:hmc}
Hamiltonian (or Hybrid) Monte Carlo \citep{Duane-87, Neal-10}, has become established as a powerful, general purpose Markov chain Monte Carlo (MCMC) algorithm for sampling from general, continuous distributions. Its efficiency is due to the fact that it makes use of gradient information from the target density to allow for an ergodic Markov chain capable of large transitions that are accepted with high probability. 
This efficiency and flexibility is demonstrated by the wide range of applications to which HMC has been applied, including:
Bayesian generalized linear models \citep{Iswharan-99}, Bayesian neural networks \citep{Neal-06}, Gaussian process regression and classification \citep{Rasmussen-06}, exponential family PCA and factor analysis \citep{mohamed_BXPCA}, and restricted Boltzmann machines \citep{Ranzato-10a}, amongst others.

For HMC, we are required to specify a potential energy function, which is the log of the joint distribution we wish to sample from, ${U}({\bf x}) = -\log p({\bf x})$ and a kinetic energy function, most typically, ${K}({\bf p}) = {\bf p}^{T}{\bf M}^{-1}{\bf p}/2$, with momentum vector $\bf p$ and a positive definite mass matrix ${\bf M}$. For standard HMC, the mass matrix is set to the identity. We defer the technical details of HMC to existing work \citep{Neal-10}, and present only the algorithm here (Alg. \ref{alg:hmc}).

HMC requires the selection of two free parameters: a step-size $\epsilon$ and a number leapfrog steps $L$. The accepted guidance is to choose a step-size to ensure that the sampler's rejection rate is between 25\%-35\%. It is also preferable to have a large $L$, since this reduces the random walk behavior of the sampler \citep{Neal-10}, but too large an $L$ results in unnecessary computation. 
In this paper, we consider a slight variation of the HMC algorithm: 
instead of performing $L$ leapfrog steps at each iteration, we 
only perform a random number of leapfrog steps, chosen from the discrete uniform distribution over $\{1,\cdots, L\}$, i.e. $L_r \sim \mathcal{U}(1, L)$ steps. This approach amounts to using a mixture of $L$ different HMC transition kernels, thus preserving detailed balance~\cite{Andrieu-03}.

\begin{algorithm}[t]
\caption{Hamiltonian Monte Carlo Algorithm}
\begin{algorithmic}[1]
\label{alg:hmc}
{
\small
\STATE Given: $M$, $L$, $\epsilon$, and $\vx^1$.
\FOR{$t=1,2,\cdots$}
  \STATE Sample $\vp^{t} \sim \mathcal{N}({\bf 0}, M)$ and $L_r \sim \mathcal{U}(1, L)$
  \STATE Let $\vx_0 = \vx^t$ and $\vp_0 = \vp^t+\frac{\epsilon}{2}\left. \frac{\partial U}{\partial {\bf x}}\right|_{{\bf x}_{0}}$
  \FOR{$l=1,2,\cdots, L_r $}
    \STATE ${\bf x}_{l} = {\bf x}_{l-1} + \epsilon M^{-1}{\bf p}_{l-1}$
    \STATE $\vp_l = \vp_{l-1}+\epsilon \left.\frac{\partial U}{\partial {\bf x}}\right|_{{\bf x}_{l}}$
  \ENDFOR
  \STATE $\vp_l = \vp_{l-1}-\frac{\epsilon}{2} \left.\frac{\partial U}{\partial {\bf x}}\right|_{{\bf x}_{l}}$
  \STATE Draw ${\bf u} \sim \mathcal{U}(0, 1)$
  \IF{${\bf u} < \min[1, e^{U({\bf x}^{t}) + K({\bf p}^{t}) - U({\bf x}_{l})-K({\bf p}_{l})}]$} 
    \STATE Let $({\bf x}^{t+1},{\bf p}^{t+1}) = ({\bf x}_{l},{\bf p}_{l})$ 
  \ELSE 
    \STATE Let $({\bf x}^{t+1},{\bf p}^{t+1}) = ({\bf x}^{t},{\bf p}^{t})$
  \ENDIF
\ENDFOR
}
\end{algorithmic}
\end{algorithm}
HMC is known to be highly sensitive to the choice of $\epsilon$ and $L$. 
We demonstrate HMC's sensitivity to these parameters by sampling from a bivariate Gaussian with correlation coefficient 0.99. We consider three settings ($\epsilon, L) = \{(0.16, 40), (0.16,50), (0.15,50) \}$ and show the behavior of the sampler as well as the autocorrelation plot in figure \ref{fig:traj}. While the first setting exhibits good behavior and low auto-correlation, small changes to these settings results in poor mixing and high auto-correlation, as seen in the other graphs. Theoretical results concerning the 
optimal acceptance rate for HMC exist, having been described by \citet{Beskos-10} and \citet{ Neal-10}, with
both concluding a rate around $0.65$ as optimal. Such results, however, would not help in choosing the best sampler out of 
the three in Figure~\ref{fig:traj}, since all three samplers in this demonstration have an acceptance rate around $0.7$, leaving little guidance for finding the most efficient sampler.

\begin{figure}[t]
\centering
\includegraphics[height = 4.3cm]{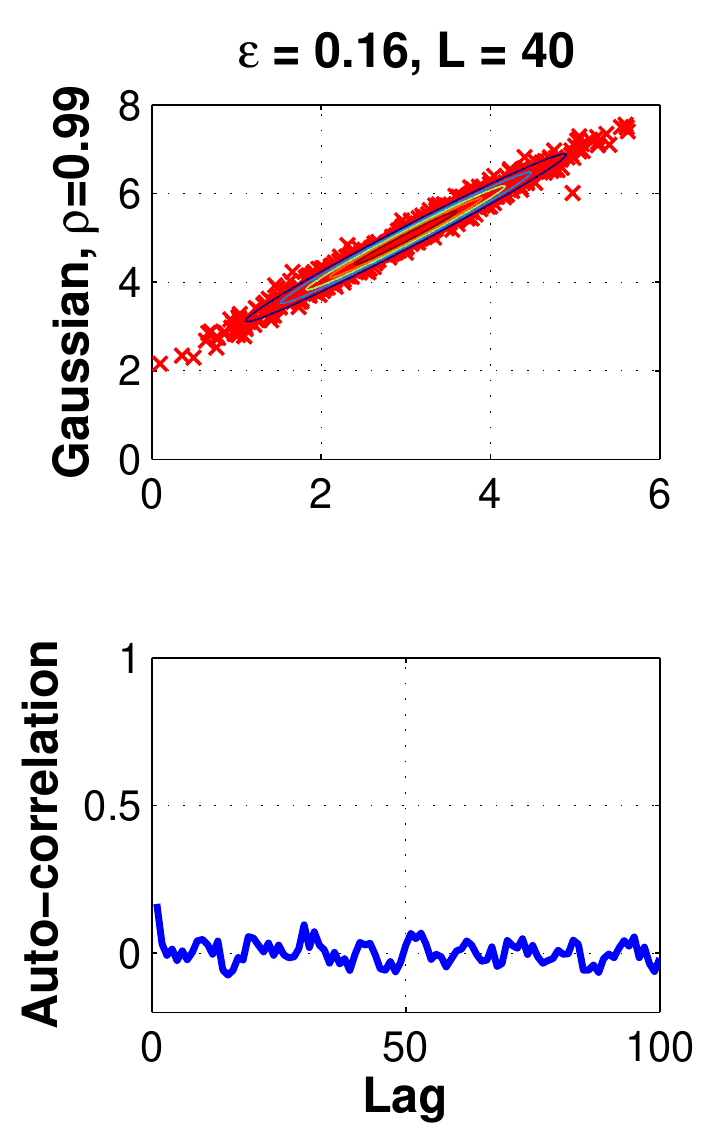}
\includegraphics[height = 4.3cm]{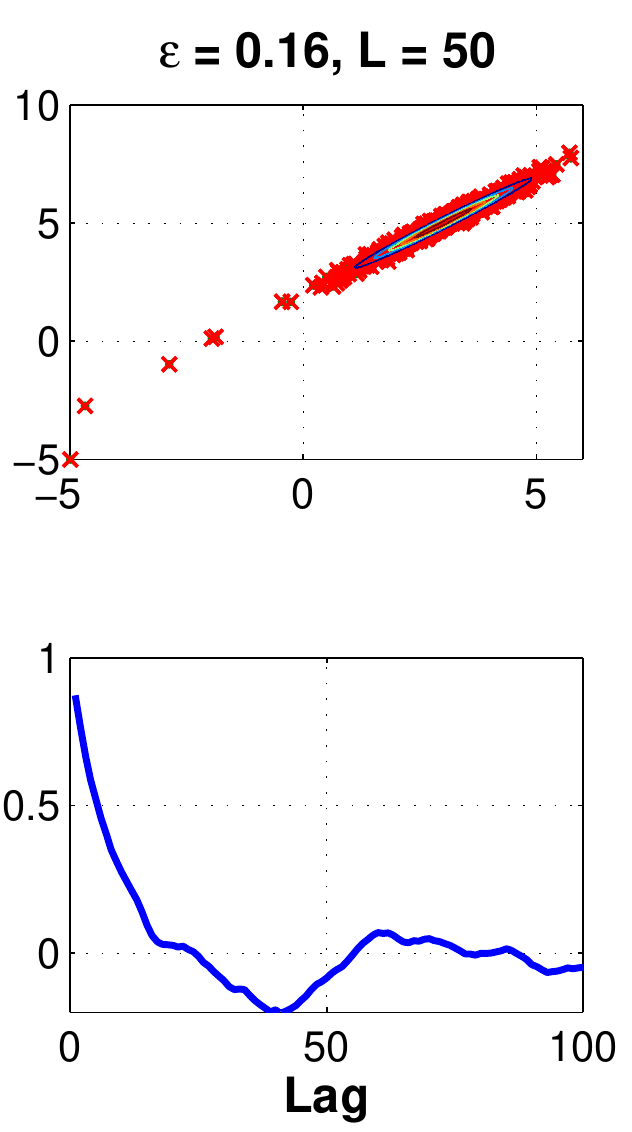}
\includegraphics[height = 4.3cm]{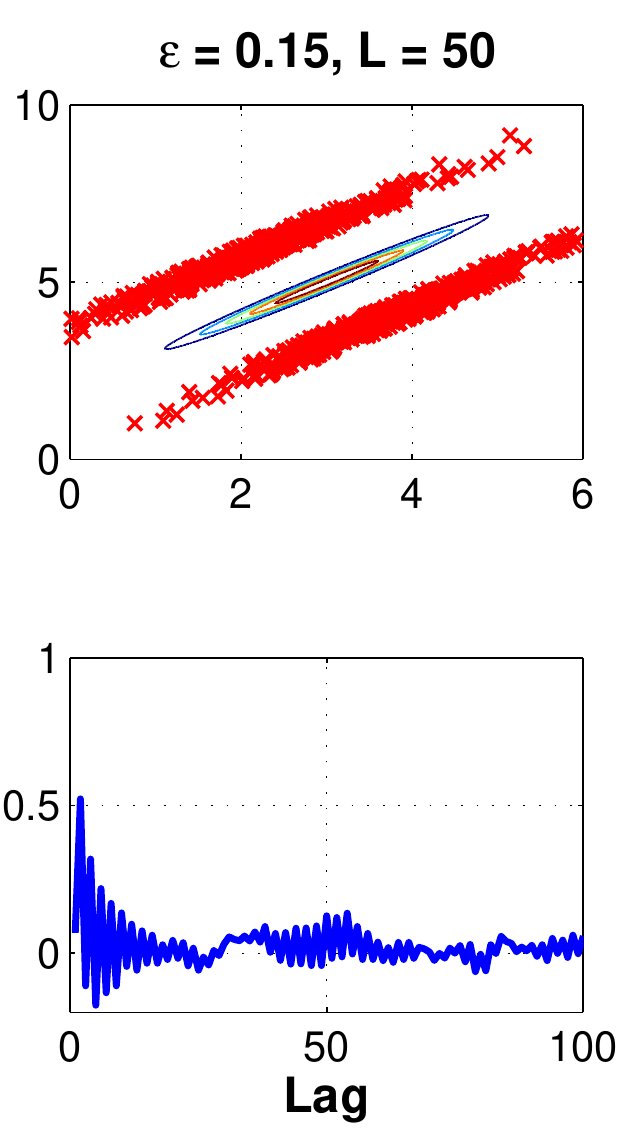}
\caption{1000 samples from a bivariate Gaussian distribution generated using HMC. We show the trajectory and auto-correlation of the samples for 3 parameter settings. 
}
  \label{fig:traj}
  \vspace{-0.5cm}
\end{figure}

To address the problem of choosing these parameters, we will introduce a method for automatically and adaptively tuning the parameters of HMC, reducing the need for time-consuming, expert tuning. An existing approach for automatic tuning of HMC was introduced by \citet{hoffman2011no}, referred to as the No U-turn sampler (NUTS). NUTS allows for automatic tuning of both HMC's parameters by tuning the stepsize $\epsilon$ during the burn-in phase, after which it is fixed and the number of leapfrog steps is adjusted thereafter for every iteration. $\epsilon$ is chosen using a stochastic approximation method referred to as dual averaging, and $L$ is chosen for every sample using a recursive algorithm in which the number of leapfrog steps is allowed to increase until the proposal trajectory taken by the sampler begins to move back towards the initial point, thus preventing U-turns and allowing for the good mixing of the chain. 

Riemann manifold HMC (RMHMC) \citep{Girolami-11} is a sampling method derived from HMC, and provides an adaptation mechanism for HMC by exploiting the Riemannian geometry of the parameter space. Rather than adapting $\epsilon$ and $L$, RMHMC accounts for the local structure of the joint density by adapting the mass matrix $\bf M$ used in HMC. Since RMHMC automatically adapts its mass matrix, the stepsize $\epsilon$ is usually fixed and the number of leapfrog steps $L$, which is a single scalar, can be chosen using the rejection rate. While the sensitivity to these parameters is greatly reduced, they must still be set and there is no general guidance on how these parameters should be chosen,  making it desirable to have a fully automatic method for RMHMC as well.

\section{Adaptive Hamiltonian Monte Carlo}
\label{sec:ahmc}

In order to adapt the MCMC parameters $L$ and 
$\epsilon$ for HMC, we need to (i) define an objective function and (ii) choose a suitable optimization method.

As pointed out in \citet{Pasarica-10}, a natural objective function for adaptation is the asymptotic efficiency of an MCMC sampler, 
$(1+2\sum_{k=1}^{\infty}\rho_k)^{-1}$,
where $\rho_k$ is the auto-correlation of the sampler with lag $k$.
Despite its appeal, this measure is problematic because the higher
order auto-correlations are hard to estimate. 
To circumvent this problem, Pasarica and Gelman (2010) introduced an objective measure 
called the expected squared jumping distance (ESJD):
$$\mbox{ESJD}(\vgamma) = \mathbb{E}_{\vgamma} \|\vx^{t+1} - \vx^{t}\|^2,$$
where $\vgamma =(L,\epsilon)$ denotes the set of parameters for HMC.
Maximizing the above objective is equivalent to
minimizing the first-order auto-correlation $\rho_1$. In practice, the above intractable expectation with respect to the Markov chain is approximated by an empirical estimator, as outlined in \citet{Pasarica-10}. 

The ESJD measure is efficient in situations where the higher order auto-correlations
increase monotonically with respect to $\rho_1$. However, it is not suitable for tuning HMC samplers since by increasing the number of leapfrog
steps one can almost always generate better samples. What we need is a measure that also takes computing time into consideration. 
 With this goal in mind, we introduce the following
objective function:
$$f(\gamma)=
\frac{\mbox{ESJD}(\vgamma)}{\sqrt{L}} 
= \frac{\mathbb{E}_{\vgamma} \|\vx^{t+1} - \vx^{t}\|^2}{\sqrt{L}}.$$
This function simply normalizes the ESJD by the number of leapfrog steps $L$, thus taking both statistical efficiency and computation into consideration. Most of our experiments will use this measure as we have found it to work very well in practice. Many works in the adaptive MCMC literature have considered matching empirical and theoretical acceptance rates in order to adapt MCMC samplers; see for example \citet{Andrieu-01} or \citet{Vihola-10}. We have found this strategy to perform poorly in the case of HMC, where samplers with the same acceptance rate can exhibit different mixing behavior (figure \ref{fig:traj}). When discussing Bayesian neural networks in our experiments (section \ref{sect:expt_BNN}), we will introduce an alternative objective function based on predictive performance. Such a measure does however only apply in predictive domains and is, consequently, less general than the normalized ESJD objective.

Now that we are armed with an objective function, we need to address the issue of optimization. Since the objective is only available point-wise (that is, it can be evaluated but its exact form is intractable), researchers typically use stochastic approximation. We use Bayesian optimization to optimize the objective. A discussion contrasting these two alternatives is presented in \citet{Hamze-11}. 

Bayesian optimization is an efficient 
gradient-free optimization tool well suited for expensive black box functions. Our objective function (normalized ESJD) is of this nature.
As mentioned earlier, normalized ESJD involves an intractable expectation that can be approximated by sample averages, where the samples are produced by running HMC for a few iterations. Each set of HMC samples for a specific set of hyper-parameters $\vgamma \in \Gamma$ results in a noisy evaluation of the normalized ESJD: $r(\vgamma) = f(\vgamma) + \varepsilon$, where we assume that the measurement noise is Gaussian $\varepsilon \sim \mathcal{N}(0, \sigma_{\eta}^2)$.

Following the standard Bayesian optimization methodology, we set $\Gamma$ to be a box constraint such that
$$\Gamma = \{(\epsilon, L):  \epsilon 
\in [b_l^{\epsilon}, b_u^{\epsilon}], L \in [b_l^{L}, b_u^{L}] \}$$ for some interval boundaries
$b_l^{\epsilon} \leq b_u^{\epsilon}$ and $b_l^{L} \leq b_u^{L}$. The parameter $L$ is discrete. The parameter $\epsilon$ is continuous, but since it is one-dimensional, we can discretize it using a very fine grid.

Since the true objective function is unknown, we specify a zero-mean Gaussian prior over it:
$$
f(\cdot) \sim GP(0, k(\cdot,\cdot))
$$
where $k(\cdot, \cdot)$ is the covariance 
function. 
Given noisy evaluations of the objective function 
$\{r_k\}_{k=1}^{i}$ evaluated at points 
$\{\vgamma_k\}_{k=1}^{i}$, we form the dataset
$\data_i = \left(\{\vgamma_k\}_{k=1}^{i}, \{\vr_k\}_{k=1}^{i} \right)$. 
Using Bayes rule, we arrive at the posterior 
predictive distribution over the unknown objective function:
\begin{align*}
f | \mathcal{D}_i,\vgamma &\sim
\mathcal{N}
(\mu_i(\vgamma), \sigma^2_i(\vgamma)) \\
\mu_i(\vgamma) &=
\mathbf{k}^{T}(\mathbf{K + \sigma_{\eta}^2 I})^{-1}\mathbf{r}_i \\
\sigma^2_i(\vgamma) &=
k(\vgamma, \vgamma) -
\mathbf{k}^{T}(\mathbf{K + \sigma_{\eta}^2 I})^{-1}\mathbf{k}
\end{align*}
where
\begin{align*}
\mathbf{K} &=
\begin{bmatrix}
k(\vgamma_1, \vgamma_1) & \ldots & k(\vgamma_1, \vgamma_i) \\
\vdots & \ddots & \vdots \\
k(\vgamma_i, \vgamma_1) & \ldots & k(\vgamma_i, \vgamma_i)
\end{bmatrix},
\end{align*}
$\mathbf{k} =
[
k(\vgamma, \vgamma_1) \; \ldots \;
k(\vgamma, \vgamma_i)]^T,$ and $\vr_i = [
r_1 \; \ldots \; r_i]^T.$ In this work, we adopt a Gaussian ARD
covariance function with $k(\vgamma_i, \vgamma_j) = \exp(-\frac{1}{2}\vgamma_i^T \Sigma^{-1} \vgamma_j)$
where $\Sigma$ is a positive definite matrix.
We set
$
\Sigma = \mathrm{\diag}\left( \left[\alpha(b_u^{\epsilon}-b_l^{\epsilon})\right]^2; \left[\alpha(b_u^{L}-b_l^{L})\right]^2 \right),
$ where $\alpha = 0.2$.
For more details on Gaussian processes, please refer to \citet{Rasmussen-06}.

\begin{algorithm}[t!]
\caption{Adaptive HMC.}
\label{alg:adaptive-mcmc}
\begin{algorithmic}[1]
{
\small
  \STATE Given: $\Gamma$, $m$, $k$, $\alpha$, and $\vgamma_1$.
  \FOR{$i=1,2,\dots, $}
    \STATE Run HMC for $m$ iterations with $\vgamma_{i}=(\epsilon_i,L_i)$.
    \STATE Obtain the objective function value $r_{i}$ using the drawn samples.
    \STATE Augment the data $\mathcal{D}_{i} = \{\mathcal{D}_{i-1}, (\vgamma_i,r_i )\}.$  
    \IF{$r_{i} > \sup_{j \in \{1, \cdots, i-1\}}r_j$}
	\STATE $s = \frac{\alpha}{r_{i}}$
    \ENDIF
    \STATE Draw ${ u} \sim \mathcal{U}([0, 1])$
    \STATE let $p_i = (\max\{i-k+1, 1\})^{-0.5} $, with $k \in \mathbb{N^+}$.
    \IF{${ u} < p_i$}
      \STATE $\vgamma_{i+1} \defeq 
      \arg\max_{\vgamma \in \Gamma} {u}(\vgamma,s|\mathcal{D}_{i})$.
    \ELSE 
      \STATE $\vgamma_{i+1} \defeq \vgamma_{i}$
    \ENDIF
  \ENDFOR
}
\end{algorithmic}
\end{algorithm}

The Gaussian process simply provides a surrogate model for the true objective. The surrogate can be used to search, efficiently, for the maximum of the objective function. In particular, it enables us to construct an acquisition function ${u}(\cdot)$ that tells us which parameters $\vgamma$ to try next. The acquisition function uses the Gaussian process posterior mean to predict regions of potentially higher objective values (exploitation). It also uses the posterior variance to detect regions of high uncertainty (exploration). Moreover, it effectively trades-off exploration and exploitation. Different acquisition functions have been proposed in the literature \cite{Mockus-82,Srinivas-10,Hoffman-11}. We adopt a variant of the Upper Confidence Bound (UCB) \cite{Srinivas-10}, modified to suit our application:
$${u}(\vgamma,s|\data_i) = \mu_i(\vgamma,s) + p_i\beta_{i+1}^{\frac{1}{2}}\sigma_i(\vgamma).$$
As in standard UCB, we set $\beta_{i+1} = 2\log\left( \frac{(i+1)^{\frac{d}{2}+2}\pi^2}{3\delta}\right)$,
where $d$ is the dimension of $\Gamma$ and $\delta$ is set to $0.1$. The parameter $p_i$ ensures that the diminishing adaptation condition for adaptive MCMC \cite{Roberts-07} is satisfied. Specifically, we set
$p_i = (\max\{i-k+1, 1\})^{-0.5} $ 
for some $k \in \mathbb{N^+}$. As $p_i$ goes to $0$, the probability of Bayesian optimization adapting $\vgamma$ vanishes as shown in Algorithm \ref{alg:adaptive-mcmc}.

It could be argued that this acquisition function could lead to premature
exploitation, which may prevent Bayesian optimization from locating the true optimum of the
objective function. There is some truth to this argument. Our goal when adapting the Markov
chain, however, is less about finding the absolute best hyper-parameters but more 
about finding sufficiently good hyper-parameters given finite computational resources. 
Given enough time, we could slow the annealing schedule 
thus allowing Bayesian optimization to explore the hyper-parameter space fully. However, under time constraints we must use faster annealing schedules. As $p_i$ decreases, it becomes increasingly difficult for Bayesian optimization to propose
new hyper-parameters for HMC. Consequently, the sampler ends up using the same
set of hyper-parameters for many iterations. With this in mind, we argue, it is more 
reasonable to exploit known good hyper-parameters rather than exploring for 
better ones. This intuition matches our experience when conducting experiments.

The acquisition function also includes a scalar scale-invariance parameter $s$, such that $\mu_i(\vgamma,s) =
\mathbf{k}^{T}(\mathbf{K + \sigma_{\eta}^2 I})^{-1}\mathbf{r}_i s$. This parameter is estimated automatically so as to rescale the rewards to the same range each time we encounter a new maximal reward.

Gaussian processes require the inversion of the covariance 
matrix and, hence, have complexity $\mathcal{O}(i^3)$, where $i$ is the number of iterations. 
Fortunately, thanks to our annealing schedule, the number
of unique points in our Gaussian process grows sub-linearly
with the number of iterations. This slow growth makes it possible to adopt kernel specification techniques, as proposed by  \citet{Engel-05}, 
to drastically reduce the computational cost without suffering any loss in accuracy.

Finally, in all our experiments, we set $\alpha = 4$, $k=100$, $m = \frac{B}{k}$, 
where $B$ is the number of burn-in samples. In our experience,
the algorithm is robust with respect to these settings and we used the same set
of parameters throughout our experiments with the exception of $\Gamma$. 
$\Gamma$ is easy to set, since one can choose the bound to be large enough
to contain all reasonable $\epsilon$ and $L$, while allowing the adaptive
algorithm enough time to explore. Alternatively, one could gauge the hardness of the
sampling problem at hand and set more reasonable bounds. In general, harder sampling problems require a smaller $\epsilon$ and a larger $L$. We follow this second strategy 
throughout our experiments and found that most sensible bounds led to performance similar to the ones reported.

\section{Analysis of Convergence}
\label{sec:convergence}

The proof of ergodicity of the adaptive HMC algorithm capitalizes on existing results for Langevin diffusions and adaptive MCMC on compact state spaces. The method of proof is based on the standard Lyapunov stability functions, also known as drift or potential functions. 

We assume that our target distribution is compactly supported on $\mathcal{M}$. 
In practice, for target distributions that are not compactly supported,
we could set $\mathcal{M}$ large enough to contain most of the mass of our target distribution.
The sampler is restricted to $\mathcal{M}$ by following this standard approach of rejecting all proposals that
fall outside $\mathcal{M}$. 

Let $\{P_{\gamma}\}_{\gamma \in \Gamma}$ be a collection of Markov chain kernels, 
each admitting $\pi$ as the stationary distribution.
That is, for each value of $\gamma = (\epsilon, L)$, we have one such kernel. Moreover, let $P_{\gamma}^n$ denote the $n$-step Markov kernel.
Our proof requires the following classical definitions:
\begin{mydefinition}
  {\upshape ({\bf Small set})} A subset of the state space $C \subseteq \mathcal{X}$ is small if there 
  exists $n_0 \in \mathbb{N}^+$, $\xi > 0$ and a probability measure $\nu(.)$
  such that ${P}^{n_0}(x, \cdot) \geq \xi \nu(\cdot)$ $\forall x \in C$.
\end{mydefinition}
\begin{mydefinition}
  {\upshape ({\bf Drift condition})} A Markov chain satisfies the drift condition if 
  for a small set $C$, there exist constants $ 0 < \lambda < 1$ and $b < \infty$,
  and a function $V: \mathcal{X} \rightarrow [1, \infty]$ such that $\forall x \in \mathcal{X}$
  $$\int_{\mathcal{X}}P(x, dy)V(y) \leq \lambda V(x) + b {\bf 1}_{C}(x).$$ 
\end{mydefinition}
Having defined the necessary concepts, we now move on to show the ergodicity 
of our adapted approach.
\begin{proposition}
  Suppose that $P_{\gamma}$, when restricted to a compact set $\mathcal{M}$, admits
  the stationary distribution $\pi$ for all $\gamma \in \Gamma$.
  If $\pi$ is continuous, positive and bounded on $\mathcal{M}$, and $|\Gamma|$ is finite, 
  then the adaptive HMC sampler is ergodic. 
\end{proposition}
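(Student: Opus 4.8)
The plan is to invoke the standard ergodicity theorem for adaptive MCMC due to \citet{Roberts-07}, which guarantees that the marginal law of an adaptive sampler converges to $\pi$ in total variation provided two conditions hold: \emph{diminishing adaptation} and \emph{containment} (bounded convergence). I would verify each in turn, exploiting the finiteness of $\Gamma$ and the compactness of $\mathcal{M}$, and then combine them.

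First I would establish diminishing adaptation. By construction (Algorithm \ref{alg:adaptive-mcmc}), the parameter is updated only when the Bernoulli draw satisfies $u < p_i$, and the annealing schedule $p_i = (\max\{i-k+1,1\})^{-0.5}$ tends to $0$ as $i \to \infty$. Hence the probability that $\gamma_{i+1} \neq \gamma_i$ is at most $p_i$. Because $\Gamma$ is finite, the total variation distance $\|P_{\gamma_{i+1}}(x,\cdot) - P_{\gamma_i}(x,\cdot)\|_{TV}$ is bounded uniformly in $x$ (by at most $2$), so $\sup_x \|P_{\gamma_{i+1}}(x,\cdot) - P_{\gamma_i}(x,\cdot)\|_{TV}$ is dominated by $2 p_i \to 0$ in probability. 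This yields diminishing adaptation directly from the schedule.

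Second, and this is the crux, I would establish containment by showing that each $P_\gamma$ restricted to $\mathcal{M}$ is uniformly ergodic with a rate that is uniform over the finite set $\Gamma$. The key step is to verify that the whole compact set $\mathcal{M}$ is a small set for each $P_\gamma$: using the continuity, positivity and boundedness of $\pi$ on $\mathcal{M}$, together with the structure of the HMC proposal (randomized momentum, a random number $L_r$ of leapfrog steps, and rejection of proposals leaving $\mathcal{M}$), one argues there exist $n_0$, $\xi_\gamma > 0$ and a measure $\nu_\gamma$ with $P_\gamma^{n_0}(x,\cdot) \ge \xi_\gamma \nu_\gamma(\cdot)$ for all $x \in \mathcal{M}$. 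This is precisely where the connection to Langevin diffusions enters: I would compare the single-leapfrog HMC update to a Metropolis-adjusted Langevin step and import the known minorization results for such chains on compact domains. A minorization on the entire state space is Doeblin's condition, which yields uniform ergodicity $\|P_\gamma^n(x,\cdot) - \pi\|_{TV} \le (1-\xi_\gamma)^{\lfloor n/n_0\rfloor}$; the drift condition is then satisfied trivially on the compact $\mathcal{M}$ by a bounded $V$. Taking $\xi = \min_{\gamma \in \Gamma}\xi_\gamma > 0$ over the finite index set makes this rate uniform in $\gamma$, so the $\epsilon$-convergence times are bounded uniformly and containment follows.

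With both conditions in hand, the theorem of \citet{Roberts-07} applies and the adaptive HMC sampler is ergodic. The main obstacle I anticipate is the minorization step for the HMC kernel: unlike random-walk Metropolis, HMC generates proposals through a deterministic, volume-preserving integrator, so establishing that the $n_0$-step kernel deposits mass bounded below uniformly over $\mathcal{M}$ requires care. One must use the randomness in the momentum and in $L_r$ to argue that every set of positive $\pi$-measure is reachable with positive probability, and that this probability is bounded away from zero uniformly over the compact set $\mathcal{M}$; the finiteness of $\Gamma$ is what lets this uniform bound be taken simultaneously across all kernels.
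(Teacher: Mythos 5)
Your proposal is correct and follows essentially the same route as the paper: both establish diminishing adaptation from the annealing schedule $p_i \to 0$, prove that the compact set $\mathcal{M}$ is small for every $P_{\gamma}$ by isolating the single-leapfrog component of the mixture kernel, identifying it with MALA, and importing the known minorization for MALA on compact sets, then use finiteness of $\Gamma$ to make the ergodicity bound uniform over all kernels before invoking the adaptive-MCMC theorem of \citet{Roberts-07}. The only cosmetic difference is that you pass from the minorization to uniform ergodicity via Doeblin's condition directly, whereas the paper adds the trivial drift function $V \equiv 1$ and cites Theorem 15.0.1 of \citet{meyn-93}; the substance is identical.
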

\begin{proof}
  To show that adaptive HMC converges on a compact set, we first show that 
  $\mathcal{M}$ is a small set. 
  
  The transition kernel of the random time HMC algorithm can be written as 
  ${P}_{\gamma}(x, .) = \sum_{l=1}^{L} \frac{1}{L}{Q}_{l,\epsilon}(x, .)$ 
  where ${Q}_{l,\epsilon}(x, .)$ is the transition kernel of an HMC sampler that takes $l$
  leapfrog steps with parameter $\epsilon$. In particular ${Q}_{1,\epsilon}(x, .)$ is the transition kernel of
  Metropolis adjusted Langevin algorithm (MALA). 
  As $\pi$ is bounded, and the proposal distribution of MALA is positive every where,
  we have that ${Q}_{1,\epsilon}$ is $\mu^{Leb}$-irreducible.
  By a slight modification of Theorem 2-2 in \citet{Roberts-96}, for Markov chains defined by MALA,
  and any compact set $C$ with positive Lebesgue measure 
  (i.e. $\mu^{Leb}(C)>0$) there exists $\xi > 0$ and a probability measure
  $\nu(\cdot)$ such that $\forall x \in C$
  ${Q}_{1,\epsilon}^{1}(x, .) \geq \xi \nu(.).$ Hence, $\mathcal{M}$ is a small set since
  $${P}_{\gamma}^{1}(x, .) \geq \frac{1}{L}{Q}_{1,\epsilon}^{1}(x, .) \geq \frac{1}{L} \xi \nu(.)$$
  for any compact set $C$ where $\mu^{Leb}(C)>0$. 
  The drift condition is 
  trivially satisfied by each HMC sampler when we choose $C$ to be $\mathcal{M}$, and $V$ to 
  be such that $V(x) = 1$ for all $x$.

  Having proved these conditions, we can now appeal to 
  Theorem 15.0.1 of~\citet{meyn-93} to conclude that 
  $\|P_{\gamma}^n(x, \cdot) - \pi(\cdot) \| < R_{\gamma} V(x) \rho^{n}_{\gamma}$ for all $n$ and
  for $0<\rho_{\gamma}<1$. 
  Since $V(X) = 1$ $\forall x$, we have 
  $$\|P_{\gamma}^n(x, \cdot) - \pi(\cdot) \| < R_{\gamma} \rho^{n}_{\gamma}.$$
  Define $R_{max} = \sup_{\gamma \in \Gamma} R_{\gamma}$ and 
  $\rho_{max} = \sup_{\gamma \in \Gamma} \rho_{\gamma}$, then 
  $\forall x \in \mathcal{M}$ and $\forall \gamma \in \Gamma$ we have
  $$\|P_{\gamma}^n(x, \cdot) - \pi(\cdot) \| < R_{max} \rho^{n}_{max}.$$

  We have shown that the kernels $\{P_{\gamma}(x, \cdot)\}_{\gamma \in \Gamma}$ are simultaneously uniformly ergodic. 
  Also, the adaptive HMC sampler has diminishing adaptation by design.  
  By Theorem 5 of \citet{Roberts-07}, these two conditions imply the claim of our proposition.
\end{proof}

In general two sets of conditions together guarantee ergodicity of
an adaptive MCMC algorithm~\cite{Roberts-07, atchadé-10}. 
First, the adaptation has to diminish eventually.
The second set of conditions is usually placed on the underlying 
MCMC samplers.
In \citet{Roberts-07}, the samplers are required to be either
simultaneously uniformly or geometrically ergodic. 
Without restricting the state space to be compact, 
it is unlikely that HMC is uniformly ergodic. Also, to the best of our 
knowledge, no theoretical results exist on the geometric ergodicity of HMC
when the state space is not compact. 
However, \citet{Roberts-02} showed that Langevin diffusion, 
which is closely related to HMC, is geometrically ergodic. 
Thus one potential challenge would be to prove or disprove
geometric ergodicity of HMC in general state spaces. 
\citet{atchadé-10} weakened 
the conditions required, still requiring diminishing adaptation, but the requirements on 
the underlying MCMC samplers were reduced to sub-geometric ergodicity. 
Although these conditions are weaker, it remains hard to check whether 
HMC satisfies them.

\section{Results}
\label{sec:results}
We show the performance of our adaptive algorithm on four widely-used models. We evaluate the performance of the samplers using the effective sample size (ESS) using: $ESS = R \left( 1 + 2 \sum_k \rho_k \right)$, where $R$ is the number of posterior samples, and $\sum_k \rho_k$ is the sum of $K$ monotone sample auto-correlations computed using the monotone sequence estimator \citep{Girolami-11}. We adopt the total number of leapfrog steps used in producing the set of samples as a proxy for computational demand, since the computation is dominated by the gradient evaluation required for each leapfrog step. An efficient sampler will result in the highest ESS for the least computation, and we will thus report the effective sample size per leapfrog step used (ESS/L), similarly to \citet{hoffman2011no}, since this takes into account computational requirements. We compute the ESS/L over all dimensions of the target distribution and report the minimum, median and maximum ESS obtained. While we report  all three summary statistics, we focus on the \emph{minimum ESS/L} as the most useful measure, since this allows us to evaluate the efficiency of the most confined coordinate, and is more indicative of ESS jointly over all coordinates rather than, as computed, over every coordinate independently \citep{Neal-10, Girolami-11}.

We compare our adaptive HMC to NUTS, and  extend our approach and compare an adaptive version of RMHMC to the standard RMHMC.
For NUTS, we tuned the free parameters of its dual averaging algorithm to obtain the best performance, and for RMHMC we use the experimental protocol and code used by \citet{Girolami-11}. We do this for all experiments in this section. Code to reproduce these results will be available online. 

\subsection{Bayesian Logistic Regression}
We consider a data set $\vX$ consisting of $N$ observations and $D$ features or covariates, and a binary label $\vy$. Using regression coefficients $\vbeta$ and bias $\beta_0$ the joint distribution for the logistic regression model is:
\begin{align}
&\!\!\log p(\vX,\! \vy, \!\vbeta, \!\beta_0) \!\propto\! \log p(\vy | \vX, \!\vbeta, \!\beta_0) \!+\! \log p(\vbeta) \!+\! \log p(\beta_0) \nonumber \\
& = \!-\! \sum_i \!\log\left (\!1 \!+\!\exp \left(\! -y_i (\beta_0 \!+\! \vx_i^\top \! \vbeta)\! \right) \! \right) \!-\! \frac{\beta_0^2}{2 \sigma^2}  \!-\! \frac{\vbeta^\top\! \vbeta}{2 \sigma^2},
\end{align}
where $y_i \in \{-1, 1\}$, and $\sigma^2$ is the prior variance of the regression coefficients. We present results on five data sets from the UCI repository. The data sets have varying characteristics with features $D$ ranging from 2 to 24, and the number of observations from 250 to 1000. For each data set, we generate 5000 samples after a burnin phase of 1000 samples, and repeat this process 10 times using differing starting points. 
The top row of figure \ref{fig:ESSL_BLR} compares the performance of our adaptive HMC (AHMC) to NUTS, while the bottom row compares our adaptive RMHMC (ARMHMC) to RMHMC. 
For this experiment, for AHMC, we set $\Gamma$ such that $\epsilon \in [0.01, 0.2]$ and $L\in\{0, \cdots, 100\}$, and for ARMHMC, we use $\epsilon \in [0.1, 1]$ and $L\in\{1, \cdots, 12\}$.

The columns of figure \ref{fig:ESSL_BLR} shows box plots of the minimum, median and maximum ESS/L values obtained. We see that the adaptive methods (AHMC and ARMHMC) exhibit good performance. For the minimum ESS/L, AHMC has better (higher) values that NUTS for all the data sets, and this behavior is consistent across most other data sets for the other summary statistics. Thus AHMC typically provides better performance and a higher effective number of samples per unit of computation used than NUTS. We also see that the ARMHMC can improve RMHMC and provide better ESS/L on what is already a highly efficient sampler.

\begin{figure}[t]
\includegraphics[width = \columnwidth]{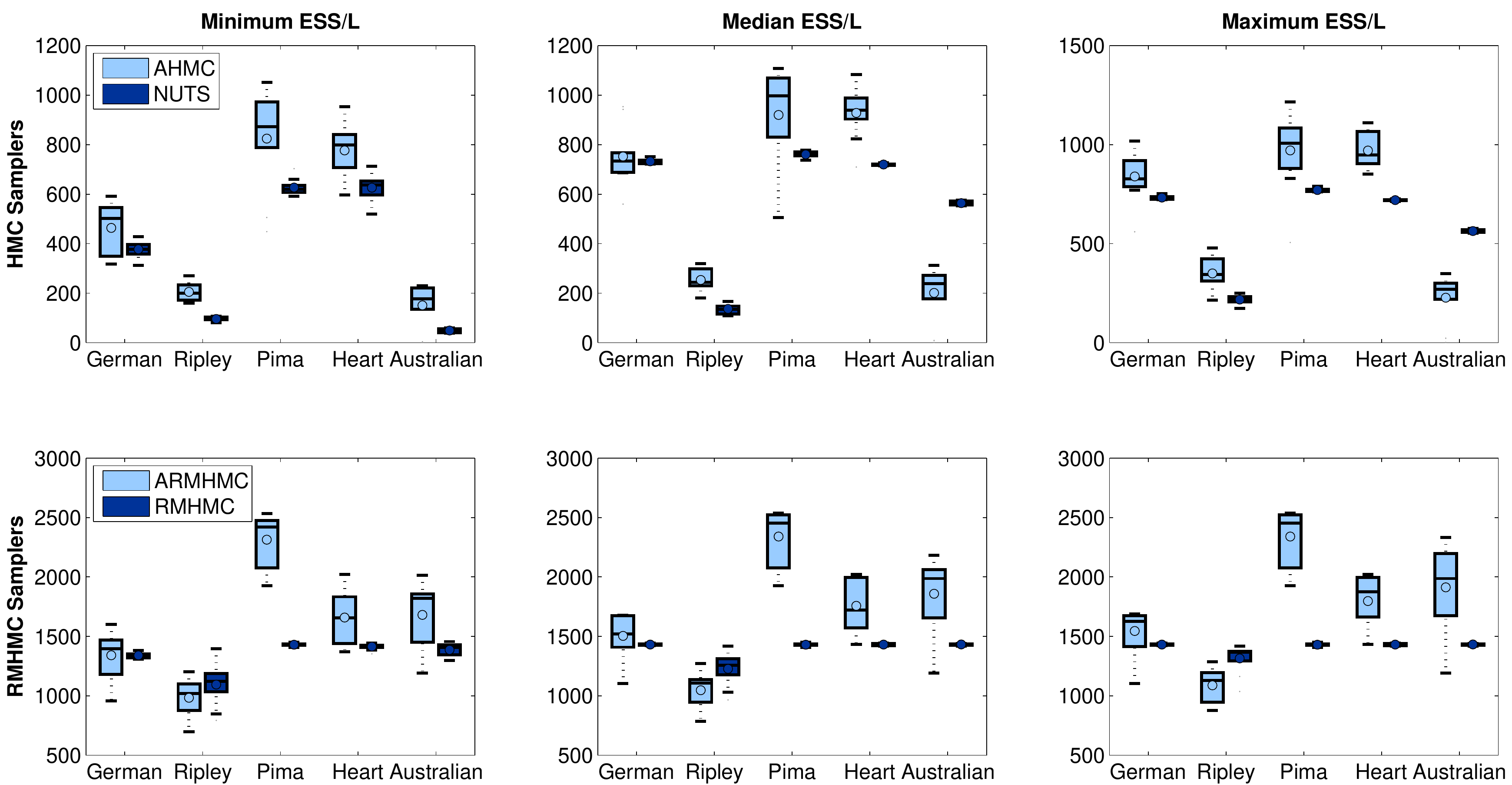}
\caption{Box plots comparing ESS/L for Bayesian logistic regression. Top row: AHMC vs NUTS. Bottom row: ARMHMC vs RMHMC.}
\label{fig:ESSL_BLR}
\end{figure}


\subsection{Log-Gaussian Cox Point Process}
We model a data set $\vY = \{y_{ij}\}$ that consists of counts at locations $(i,j), i,j = 1 \ldots, d$ in a regular spatial grid using a log-Gaussian Cox point process (LGC) \citep{christensen2005scaling, Girolami-11}. Observations $y_{ij}$ are Poisson distributed and conditionally independent given a latent intensity process $\vLambda = \{\lambda_{ij} \}$ with means  $s \lambda_{ij} = s \exp (x_{ij})$, where $s = \frac{1}{d^2}$. The rates $\vX = \{x_{ij}\}$ are obtained from a Gaussian process  with mean function $m(x_{ij}) = \mu \mathbf{1}$ and covariance function $\Sigma(x_{ij}, x_{i' j'}) = \sigma^2 \exp \left( - \delta(i, i', j, j')/\beta d \right)$, where $\delta(i,i', j, j') = \sqrt{(i - i')^2 + (j - j')^2}$. The joint probability $\log p(\vy, \vx | \mu, \sigma, \beta)$ is proportional to:
\begin{align}
 & \sum_{i,j} y_{ij} x_{ij} \! - \! d \exp( x_{ij})\! - \! \frac{1}{2}(\vx \!-\! \mu\mathbf{1})^\top \vSigma^{-1} (\vx \!-\! \mu\mathbf{1}).
\end{align}
We generate samples jointly for $\vx, \sigma, \mu, \beta$ using a grid of size $d = 64$, using a synthetic data set obtained by drawing from the generative process for this model.  We generate 5000 samples after a burnin of 1000 samples.
For this model, we use $L \in \{1, \cdots, 500\}$, $\epsilon \in [0.001, 0.1]$ for AHMC, and use $L \in \{1, \cdots, 60\}$, $\epsilon \in [0.01, 1]$ for ARMHMC.
We compare the performance of the adaptive method we presented in terms of ESS per leapfrog step in figure \ref{fig:ESSL_LGC}. 
We compare AHMC versus NUTS and ARMHMC versus RMHMC, showing the minimum, median and maximum ESS per leapfrog step obtained for 10 chains with dispersed starting points. 
We see that almost all points lie below the diagonal line, which indicates that the AHMC and ARMHMC have better ESS/L compared to NUTS and RMHMC, respectively. Thus even for high-dimensional models with strong correlations our adaptive method allows for automatic tuning of the sampler and consequently the ability to obtain higher quality samples than with competing methods.
\begin{figure}[t]
\centering
\includegraphics[width = 0.45\columnwidth]{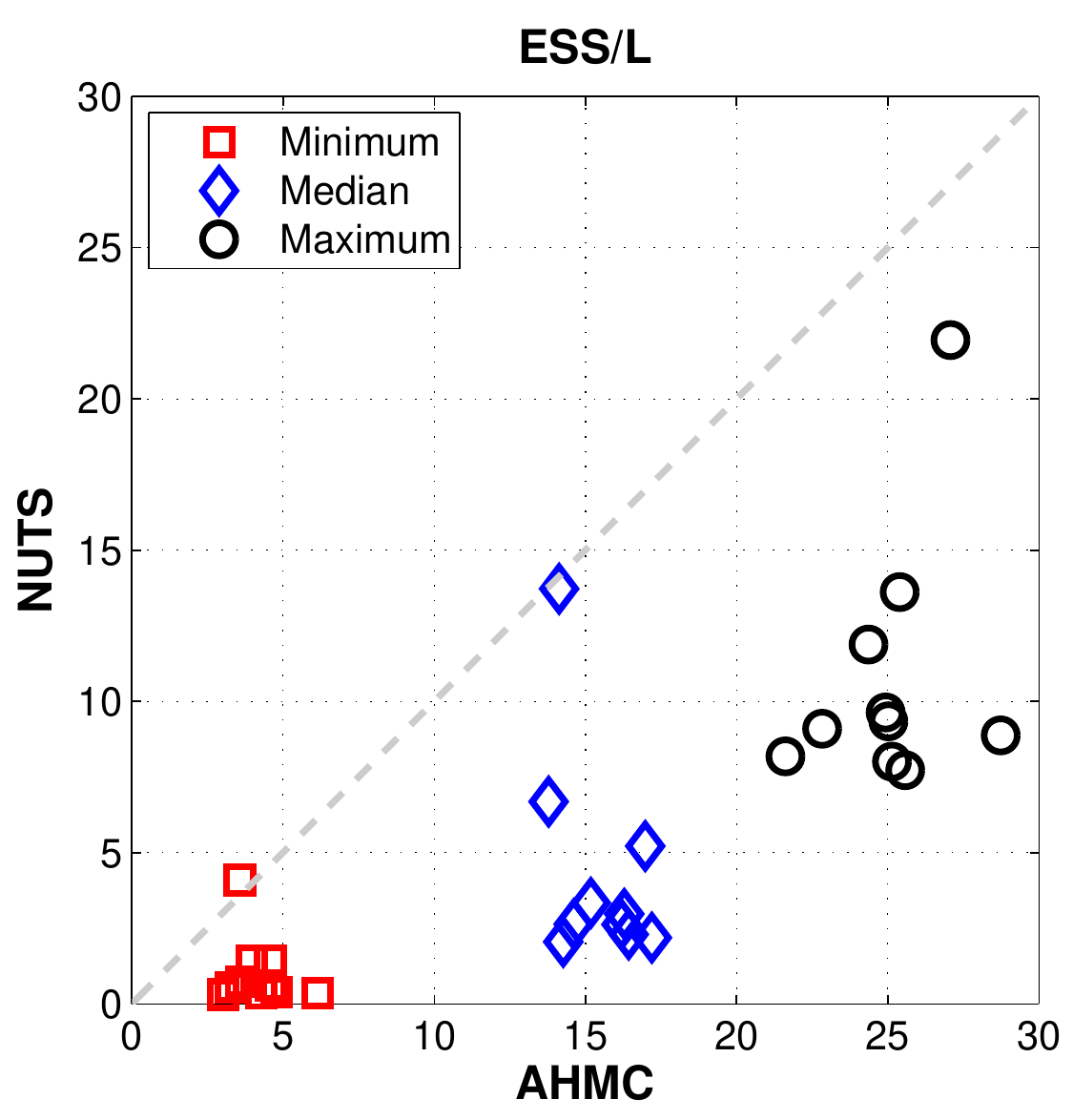}
\includegraphics[width = 0.45\columnwidth]{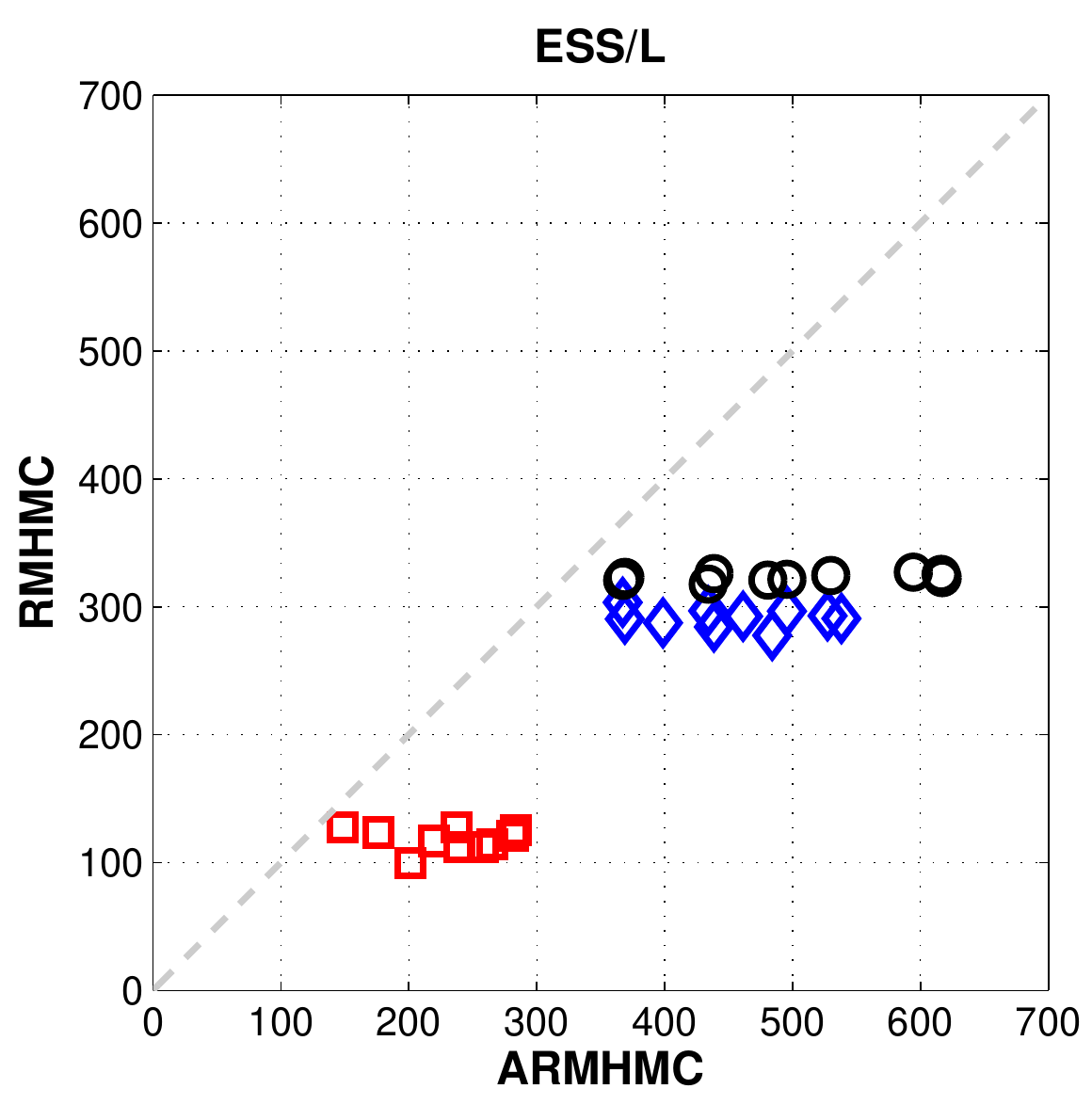}
\caption{Comparing minimum (red), median (blue) and maximum (black) ESS/L for the Log-Gaussian Cox model. Each of the colored glyphs represents one of the 10 chains generated.}
\label{fig:ESSL_LGC}
\end{figure}

We examine the quality of the posterior distribution obtained for AHMC and NUTS in figure \ref{fig:heatmap_LGC}, by visualizing the latent field and its variance, and comparing to the true data (which is known for this data set). The top row shows the true latent fields. From the true data observations (shown in top right corner), we see that there are few data points in this region and thus we expect to have a high variance in this region. The average of the samples obtained using AHMC shows that we can accurately obtain samples from the latent field $\mathbf{x}$, and that the samples have a variance matching our expectations. While NUTS is able to also produce good samples of the latent field, the variance of the field is not well captured (bottom right image).
\begin{figure}[t]
\centering
\includegraphics[height = 7cm]{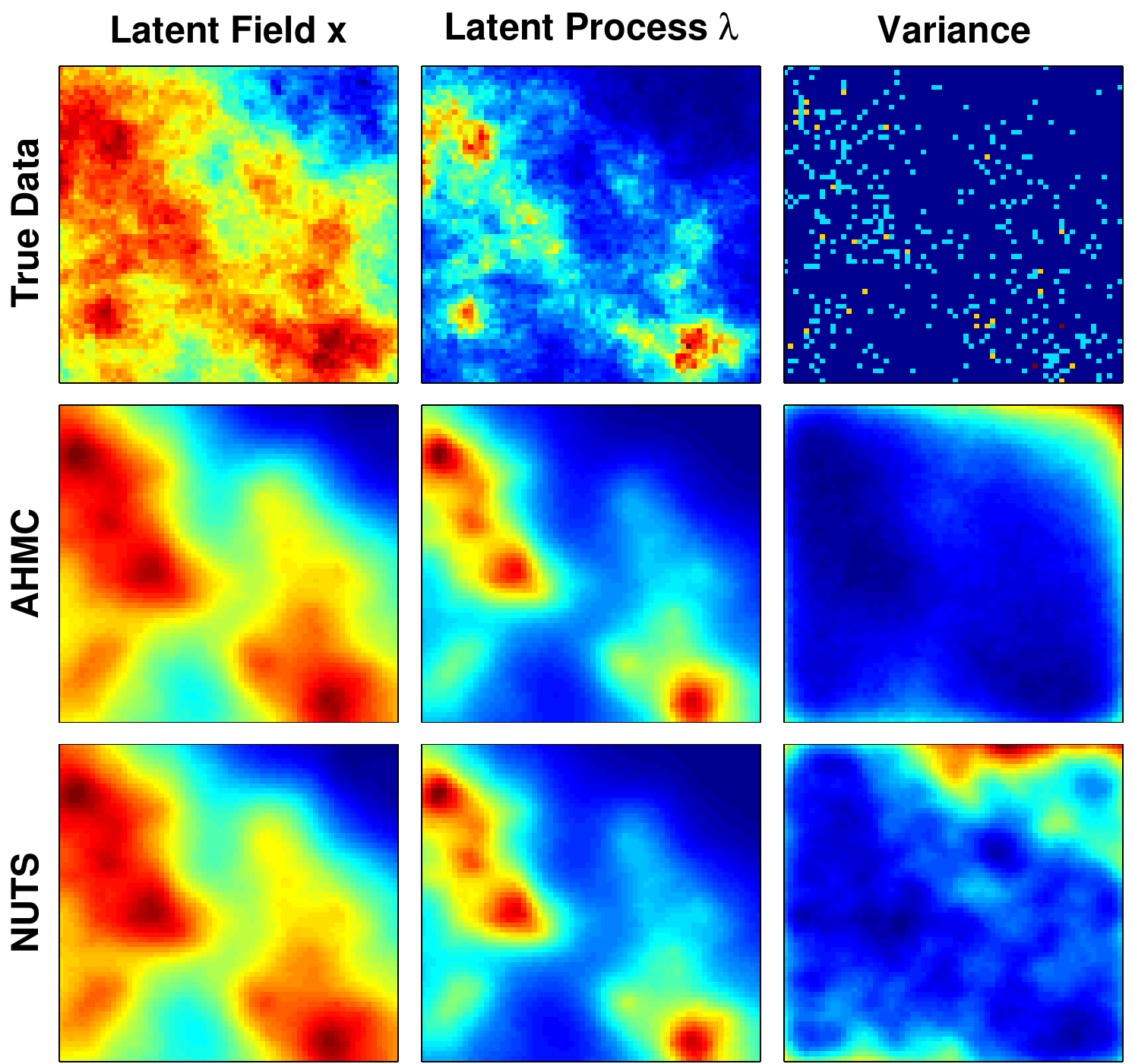}
\caption{Comparing quality of posterior distributions from samples obtained using AHMC and NUTS for the log-Gaussian Cox model. The top-right image shows the locations of the true data.}
\label{fig:heatmap_LGC}
\end{figure}

\subsection{Stochastic Volatility}
We consider a stochastic volatility model described by \citet{Kim-98} and \citet{Girolami-11}, in which we consider observations $y_t$, regularly spaced in time for $t = 1, \ldots, T$. Each $y_t$ is specified using a latent variable $x_t$, which represents the log-volatility following auto-regressive AR(1) dynamics. The model is specified as:
\begin{align}
y_t = & \epsilon_t \beta \exp \left(0.5 x_t\right), & \epsilon_t \sim & \mathcal{N}(0,1)\\
x_{t+1} = & \phi x_t + \eta_{t+1}, & \eta_{t+1} \sim & \mathcal{N}(0, \sigma^2) \\
x_1 \sim & \mathcal{N} \left( 0, \frac{\sigma^2}{1 - \phi^2}\right), & p(\beta) \propto & \frac{1}{\beta}.
\end{align}
For stationarity of the log-volatility, $|\phi| <1$, and the standard deviation $\sigma > 0$, whose priors we set to $\tfrac{\phi + 1}{2} \sim Beta(20,1.5)$ and $\sigma^2 \sim\textrm{inv-}\chi^2(10, 0.05)$, respectively.
The parameters to be sampled by HMC is thus $\boldsymbol{\Theta} = \{\mathbf{x}, \beta, \phi, \sigma^2 \}$, and the joint probability is:
\begin{align}
p(\mathbf{y},\! \boldsymbol{\Theta})\!\! =\!\!  \prod_{t=1}^T \! p(y_t | x_t, \beta) p(\!x_t | x_{t\!-\!1}, \phi, \sigma^2\!) p(\!\beta\!) p(\!\sigma^2\!)p(\!\phi\!).
\end{align}
 We make use of the transformations $\sigma = \exp(\gamma)$ and $\phi = \tanh(\alpha)$ to ensure that we sample using unconstrained variables; the use of this transformation requires the addition of the Jacobian of the transformation of variables. We generate samples jointly using our AHMC methods, using training data with $T = 2000$. 
 For our experiments, we use a burnin period of $10,000$ samples and thereafter generate $20,000$ posterior samples. 
 We restrict our box constraint such that $L \in \{1, \cdots, 300\}$, $\epsilon \in [10^{-4}, 10^{-2}]$.
 We show the results comparing ESS for the two methods in table \ref{tab:SV}. These results again show higher values for ESS per leapfrog step, demonstrating that a better performing sampler can be obtained using AHMC -- further demonstrating the advantages of AHMC methods for sampling from complex hierarchical models.

\begin{table}[t]
    \scriptsize
    \centering
    \caption{Comparative results for the stochastic volatility model. }
    \begin{tabular}{| c | c | c| c|}
        \hline
         &  \multicolumn{3}{c|}{\textbf{ESS per Leapfrog}} \\
        \hline
        Sampler & minimum & median & maximum\\ 
        \hline
        \textbf{AHMC} & 1.3 $\pm$ 0.1 & 6.9 $\pm$ 0.7 & 14.9 $\pm$1.4\\ 
        \textbf{NUTS} & 0.7 $\pm$ 0.3 & 3.5 $\pm$1.6 & 9 $\pm$2.8 \\
        \hline
    \end{tabular}
    \label{tab:SV}
    \vspace{-0.5cm}
\end{table}

\subsection{Bayesian Neural Networks}
\label{sect:expt_BNN}
We demonstrate the application of our adaptive approach using 
Bayesian neural networks (BNNs) to show that AHMC allows for 
more effective sampling of posterior parameters even when compared to samplers 
finely tuned by an expert. We make use of the Dexter data set from the NIPS 2003 
feature selection challenge, which is a subset of the well-known Reuters text categorization
benchmark.
The winning entries submitted by 
\citet{ Neal-06} used a number of feature selection 
techniques followed by a combination of Bayesian Neural Networks and Dirichlet diffusion trees. 
The entry that used only BNNs was placed second and achieved highly competitive results~\cite{Guyon-05}.

The BNN model consists of 295 input features and 2 hidden layers 
with 20 and 8 hidden units respectively. The input features are selected 
from the full set of features through univariate feature selection. 
The weights and bias as well as a few other parameters of this particular 
network adds up to form a 6097 dimensional state space for the HMC sampler. 

For this model, we use cross-validation to construct the reward signal.
We divide the data into $n$ sets, 
and train $n$ BNNs
each on $n - 1$ sets and test them on the remaining set like in the case of 
normal cross-validation. 
The cross-validation error is then used to calculate the reward.
To take computation into account, we always evaluate the reward over the same 
number of leapfrog steps, i.e. for each evaluation of the reward we use a 
different number of samples and a different number of leapfrog steps for each
sample, but the product of the two remains constant.

We compare the results in table \ref{tab:dextertable_test}, 
where the performance measure is the prediction error on a test set 
(unknown to us) and was obtained after submission to the competition system. 
The improved results obtained using the AHMC strategy are clear from the table, 
also demonstrating that good adaptation can be preferable to the introduction 
of more sophisticated models.

\begin{table}[h]
  \scriptsize
  \caption{Classification error on the test set of the Dexter data set.
      The table shows the mean and the median prediction errors
      of our $8$ BNNs trained as in cross-validation. The majority vote
      of these $8$ networks achieves slightly better results 
      than that of a more sophisticated
      model involving Dirichlet diffusion trees.}
  \label{tab:dextertable_test}
    \begin{center}
      \begin{tabular}{lr}
      \multicolumn{1}{l}{\bf Method}  &\multicolumn{1}{c}{\bf Error} \\
      \hline 
      Expert-tuned HMC for BNN                &0.0510\\
      AHMC for BNN (Mean error)       &0.0498\\
      AHMC for BNN (Median error)     &0.0458\\
      \hline
      Winning entry (using Dirichlet Diffusion Trees)  &0.0390\\
      AHMC for BNN + Majority Voting  &0.0355 \\
      \hline
      \end{tabular}
    \end{center}
    \vspace{-0.7cm}
\end{table}

\section{Discussion and Conclusion}
\label{sec:concl}
In section \ref{sec:ahmc} we described the use  the expected squared jumping distance as a suitable objective. Several other objectives, such as the mean update distance, cross-validation error and the cumulative auto-correlation, are also suitable, and their use depends on the particular modelling problem. In many machine learning tasks, researchers design MCMC algorithms to estimate model parameters and, subsequently, evaluate these models using cross-validation, such as the competition task in section \ref{sect:expt_BNN}. 
Moreover, often researchers modify their samplers so as to reduce test set error. 
In this paper, we demonstrate the use of predictive losses, 
such as cross-validation error, to guide the adaptation. 
This approach, although never reported before to the best of our knowledge, 
simply makes the tuning process followed by many researchers explicit.
Ultimately the models whose parameters we are 
estimating by running a Markov chain will be tested on predictive tasks. 
Hence, it is natural to use predictive performance 
on such predictive tasks to improve 
the exploration of the posterior distribution. 
Of course, these objective measures are only applicable when 
sufficient data is available to obtain good predictive estimates.
\\ \\
We addressed the widely-experienced difficulty in tuning Hamiltonian-based Monte Carlo samplers by developing algorithms for infinite adaptation of these Markov chains using Bayesian optimization. The adaptive Hamiltonian Monte Carlo and adaptive Riemann manifold HMC we developed automate the process of finding the best parameters that control the performance of the sampler, removing the need for time-consuming and expert-driven tuning of these samplers. Our experiments show conclusively that over a wide range of models and data sets, the use of adaptive algorithms makes it easy to obtain more efficient samplers, in some cases precluding the need for more complex approaches. 
Hamiltonian-based Monte Carlo samplers are widely known to be an excellent choice of MCMC method, and we hope that this paper removes a key obstacle towards the more widespread use of these samplers in practice.

{
\small
\bibliography{ziyu}
\bibliographystyle{icml2013}
}

\end{document}